\theoremstyle{remark}
\newtheorem{thm}{$\;\;\;$Theorem}
\newtheorem{lem}{$\;\;\;$Lemma}
\newtheorem{defn}{$\;\;\;$Definition}
\def\BibTeX{{\rm B\kern-.05em{\sc i\kern-.025em b}\kern-.08em
    T\kern-.1667em\lower.7ex\hbox{E}\kern-.125emX}}
\begin{document}

\title{Impact of Spatial Multiplexing on the Throughput of Ultra-Dense mmWave AP Networks
}

\author{Shuqiao~Jia and
	Behnaam~Aazhang~\IEEEmembership{}}

\maketitle

\begin{abstract}
The operating range of a single millimeter wave (mmWave) access point (AP) is small due to the high path loss and blockage issues of the frequency band. To achieve the coverage similar to conventional sub-6GHz networks, the ultra-dense deployments of APs are required by the mmWave network. In general, the mmWave APs can be categorized into backhaul-connected APs and relay APs. 
Though the spatial distribution of backhaul-connected APs can be captured by the Poison point process (PPP), the desired locations of relay APs depend on the transmission protocol operated in the mmWave network. In this paper, we consider modeling the topology of mmWave AP network by incorporating the multihop protocol. We first derive the topology of AP network with the spatial multiplexing disabled for each transmission hop. Then we analyze the topology when the spatial multiplexing is enabled at the mmWave APs. To derive the network throughput, we first quantify the improvement in latency and the degradation of coverage probability with the increase of spatial multiplexing gain at mmWave APs. Then we show the impact of spatial multiplexing on the throughput for the ultra-dense mmWave AP network.

\end{abstract}


\section{Introduction}
The use of millimeter wave (mmWave) frequencies in access points (APs) becomes a trend in the emerging fifth generation network \cite{andrews2014will,di2015stochastic,jia2016impact,andrews2017modeling,bai2015coverage,rappaport2013millimeter}. 
Despite the large available bandwidth in mmWave frequencies, the small wavelength experiences a high path loss and a severe penetration loss, which limits the coverage of a single mmWave AP \cite{rappaport2015wideband}.
%
To achieve the same size of network coverage as the sub-6GHz network, the ultra-dense deployment of APs appears to be the solution for mmWave network \cite{jia2016impact,bai2015coverage,alammouri2018sinr}. 
Note that the topology of sub-6GHz network is relatively simple, where all the APs are connected to the Internet backhaul and each cell is covered by one AP in the network \cite{andrews2011tractable}.
However, owing to its ultra-density, only a small portion of APs in the mmWave network have direct access to the Internet backhaul. Other mmWave APs are used as relays to extend the coverage for the network, as shown in Fig.\ref{Figure: sub-tier structure} \cite{jia2016impact}. Accordingly, the topology of mmWave AP network varies with different multihop transmission protocols, which lead to different network performances. 

Several aspects of the ultra-dense mmWave AP network have been studied.  
In \cite{andrews2017modeling}, the mmWave modeling was comprehensively studied.
The hybrid precoding for mmWave network was proposed in \cite{alkhateeb2015limited}.
In \cite{jia2016impact}, the optimal intensity of the ultra-dense mmWave AP network was derived under the impact of blockage. 
The SINR coverage probability and rate analysis for mmWave networks were presented in \cite{di2015stochastic,bai2015coverage}.
We remark that all the previous work assumed mmWave APs to be uniformly distributed in the mmWave network. Such an assumption has been validated for the sub-6GHz network \cite{andrews2011tractable}. However, the ultra-density of mmWave APs results in a complicated and flexible network topology, which cannot be captured by simply applying a uniformly distributed spatial model.

We approach the topology of ultra-dense mmWave AP network by introducing the tiered model, rather than modeling the network as a whole.
In the tiered mmWave AP network, the backhaul-connected APs are considered as the $0^{\text{th}}$ tier. 
Other AP tiers in the mmWave network are used as relay, which extend the coverage of backhaul-connected APs to the whole network. 
Note that the spatial distribution of backhaul-connected APs is determined by the infrastructure. However, the AP locations in other tiers are decided by the topological structures of previous tiers and the transmission protocol. Consequently, the topology of mmWave AP network has large flexibility and is dependent on the transmission protocol.

Our key contribution is to analyze the performance of ultra-dense mmWave AP network with considering the impact of transmission protocol. Specifically, we analyze how the hybrid precoding scheme affects the performance of ultra-dense mmWave AP network. 
Note that the implementation of hybrid precoding combines the analog beamforming and baseband spatial multiplexing \cite{alkhateeb2015limited}. In this paper, we focus on the impact of spatial multiplexing on the performance of mmWave AP networks.
In section II, we first introduce the tiered structure for the mmWave APs. We then derive the topology of mmWave AP network concerning the different spatial multiplexing gains in Section III. The performance analysis is provided in section IV, where we characterize the latency, the coverage probability and the throughput for the mmWave AP network. 

\section{Tiered Model of mmWave APs}
\begin{figure*}
	\vspace*{-20pt}
	\centering
	\begin{minipage}{1\columnwidth}
	\raggedleft	
	\subfloat[{\scriptsize } \label{Figure: Sub-teri1}]{%
		\includegraphics[width=0.25\linewidth]{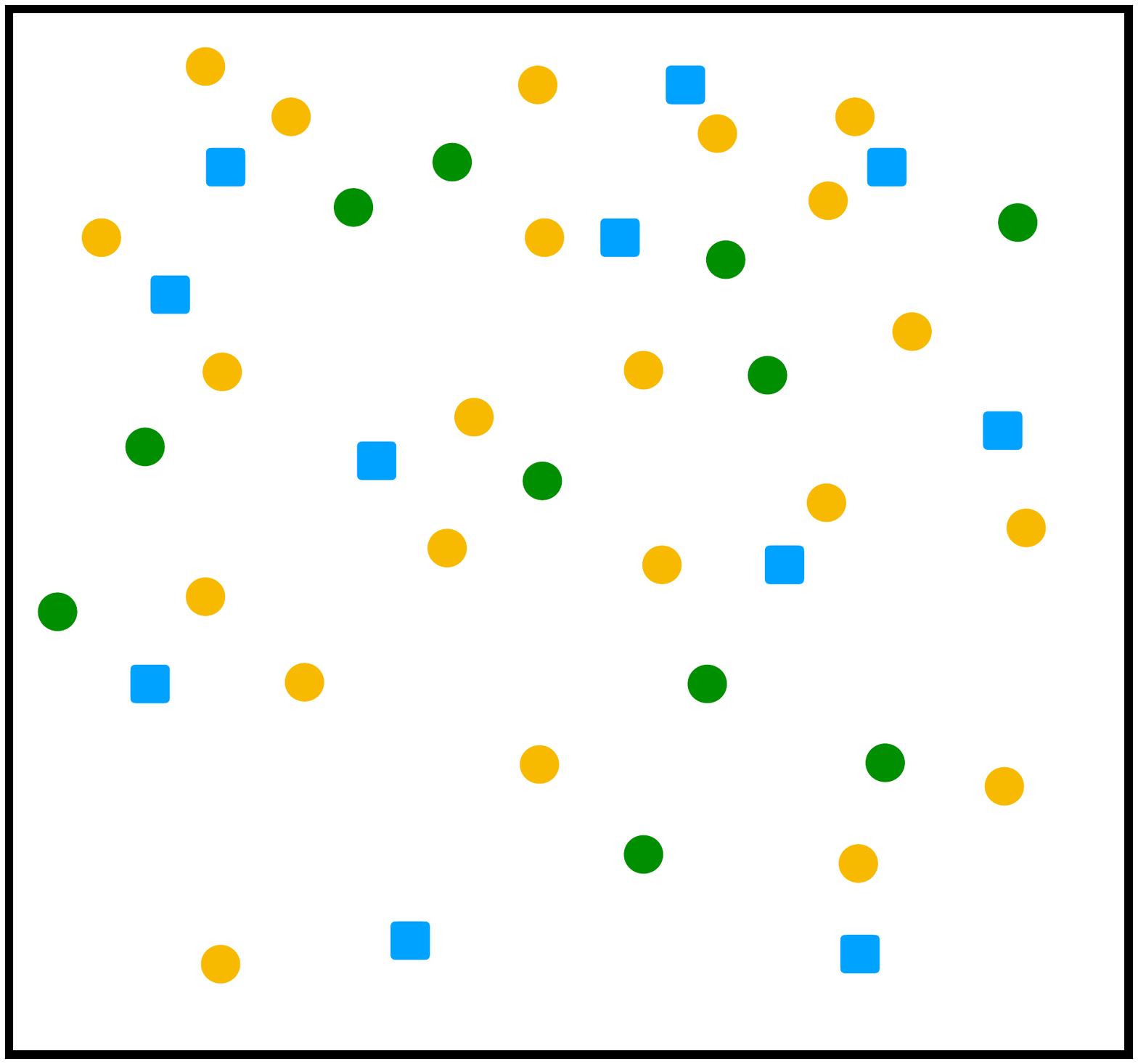}}
	\subfloat[{\scriptsize }\label{Figure: Sub-teri2}]{%
		\includegraphics[width=0.25\linewidth]{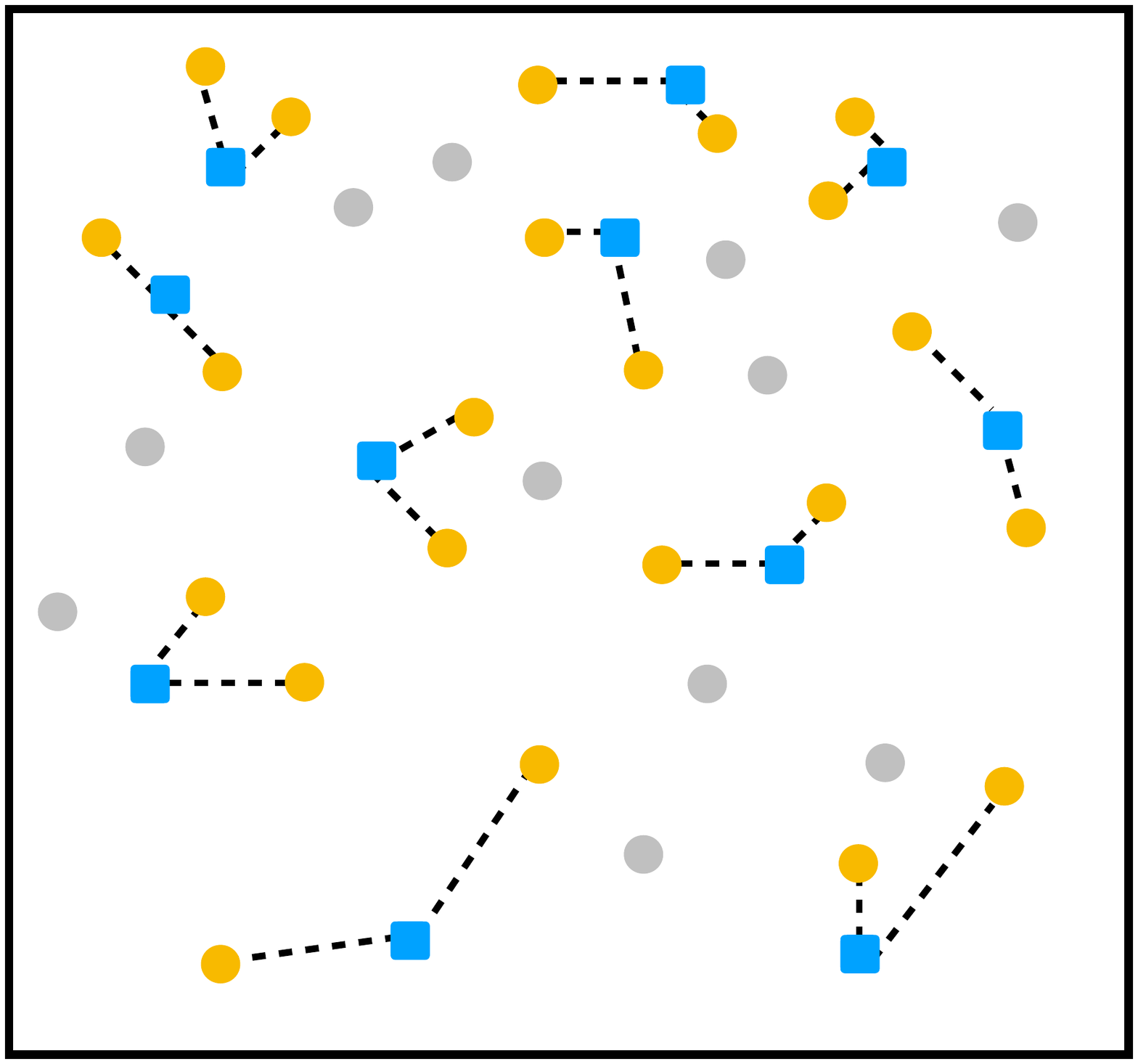}}
	\subfloat[{\scriptsize } \label{Figure: Sub-teri3}]{%
		\includegraphics[width=0.25\linewidth]{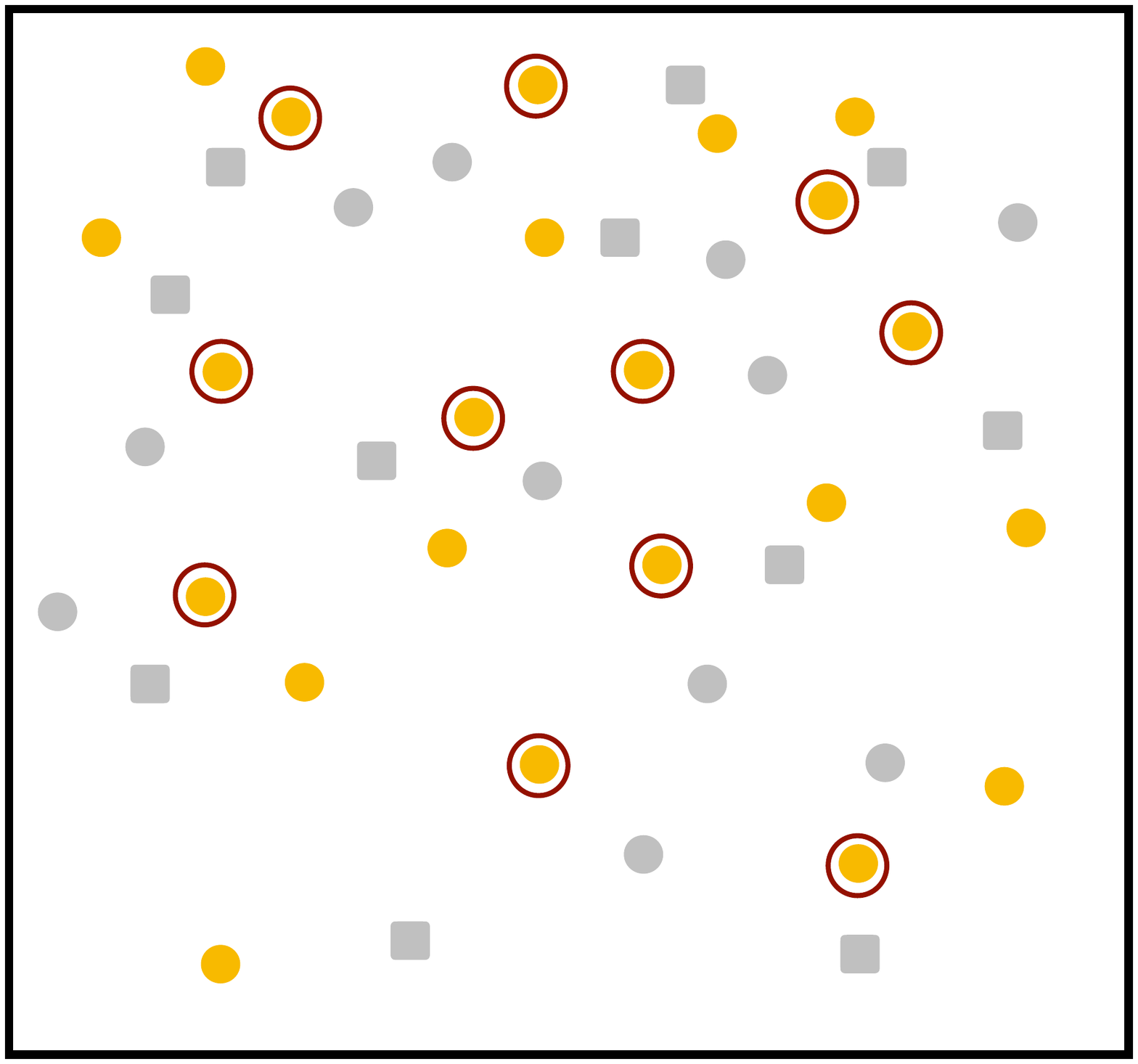}}
	\subfloat[{\scriptsize }\label{Figure: Sub-teri4}]{%
		\includegraphics[width=0.25\linewidth]{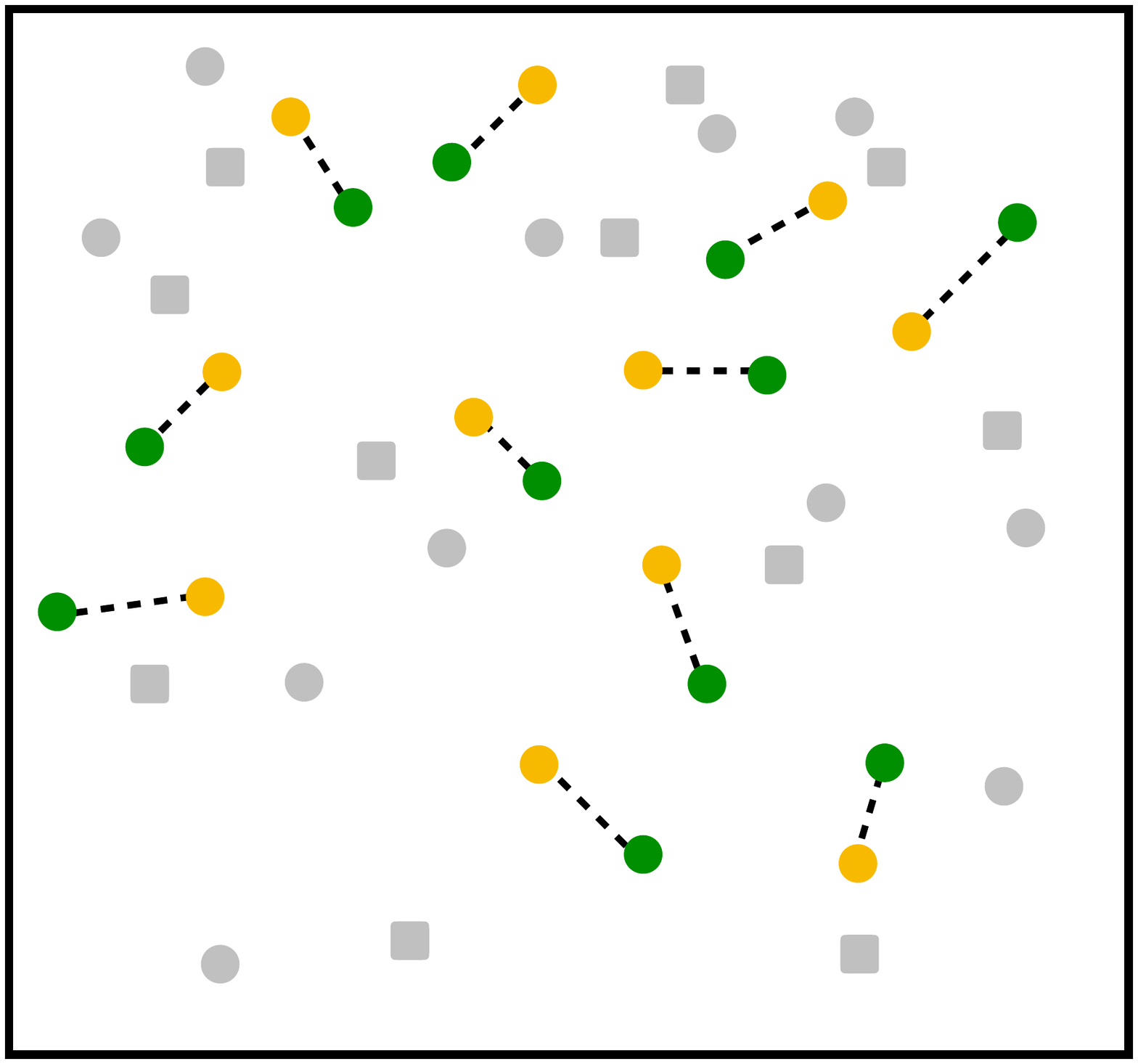}}
	\end{minipage}
	\caption{The mmWave AP network with multihop transmission protocol: (a) the whole AP network including backhaul-connect APs (tier 0) at blue squares, relay APs with hop count 1 (tier 1) at yellow dots, relay APs with hop count 2 (tier 2) at green dots; (b) transmission hop 1 from tier 0 to tier 1; (c) a subset of tier 1 (highlight in red circle) scheduled to transmit at hop 2; (d) transmission hop 2 from tier 1 to tier 2.}
	\label{Figure: sub-tier structure} 
	\vspace*{-15pt}
\end{figure*}
The mmWave APs have an inherently tiered structure due to the multihop protocol operated in the network, as shown in Fig.\ref{Figure: sub-tier structure}. Therefore, instead of modeling the whole network, we model the spatial distribution of mmWave APs for each transmission hop. The APs directly connected to the backhaul are defined as the $0^{\text{th}}$ tier. The other mmWave APs are used as relay, which can be further divided into multiple tiers with respect to 'the distance' to the $0^{\text{th}}$ tier. 

In the mmWave AP network, we measure the distance of an AP to tier 0 by the number of relays between the AP and tier $0$, which is termed as hop count. Based on the hop count, the $i^{\text{th}}$ AP tier or tier $i$ is defined as the subset of mmWave APs with hop count equal to $i$. 

Consider a mmWave AP network of density $\Lambda_{\text{A}}$, where the transmission protocol contains $M$ hops. It follows that mmWave APs can be divided into $M+1$ tiers, namely one backhaul-connected tier and $M$ relay tiers.
The locations of mmWave APs in tier $i$ are modeled by the point process $\Phi_i = \{x_1,x_2,\cdots\}$, $x_j \in \mathbb{R}^2$. 
We then define the sequence of point processes $\{\Phi_i\}_{i=0}^M$ as the topology of the mmWave AP network.

Note that the locations of backhaul-connected mmWave APs are restricted by the network infrastructure. Thus, we model tier 0 by a homogeneous Poison point process (PPP) $\Phi_0$ with intensity $\Lambda_0$. 
Unlike tier 0, the spatial distribution of $\Phi_{i+1}$ is determined by the point process $\Phi_i$ and the transmission protocol applied in the mmWave AP network. 
Denote $\phi_i$ as the subset of $\Phi_i$, where $\phi_i$ consists of the APs which are scheduled to transmit at hop $i+1$.
For the $i+1^{\text{th}}$ transmission hop, the AP located at $x \in \phi_i$ transmits to a cluster of points $\mathcal{B}_i^x =\{y_1,y_2,\cdots\}$, where $\mathcal{B}_i^x$ is centered at $x$.
It follows that tier $i+1$ can be expressed as $\Phi_{i+1}  = \bigcup_{x\in \phi_i}\, {\mathcal{B}_i^x}$.
Here, the points of $\mathcal{B}_i^x$ are assumed to be independently and identically distributed (i.i.d.) around the cluster center $x$.

\section{Topology of  mmWave AP network}

Assume that the hybrid precoding is implemented at the mmWave APs, which consists of analog beamforming and spatial multiplexing. Note that analog beamforming is mandatory for a mmWave AP to combat the high path loss. However, spatial multiplexing is required only when the mmWave AP needs to support multiple data streams. 

\begin{defn}{\textbf{Spatial Multiplexing Gain.}}
	The spatial multiplexing gain for a mmWave AP is defined as the number of data streams supported by the AP. 
\end{defn}

Assume the mmWave AP to be equipped with $K$ RF chains.
Then the spatial multiplexing gain of the mmWave AP is upper bounded by $K$.
Let all the transmitters of the same hop employ the identical hybrid precoding process. It follows that all APs in $\phi_i$ have the same spatial multiplexing gain, which is denoted by $k_i$.
Next, we characterize the topology of mmWave AP network with respect to the spatial multiplexing gain $k_i$.

\subsection{Topology with Spatial Multiplexing Disabled}
Given that the spatial multiplexing is disabled, we then have $k_i = 1, \forall i$.
It implies that for each $x \in \phi_i$, the cluster $\mathcal{B}_i^x$ contains only one point $y \in \Phi_{i+1}$, where the probability density function (PDF) of $y$ conditioning on $x$ is denoted by $f_i(y|x)$.
Assume that the mmWave APs in $\phi_i$ transmit at the same power.
Let each AP of tier $i+1$ be deployed to receive the maximum average power from $\phi_i$.
Given that $\phi_i$ is a PPP, we then derive the conditional PDF $f_i(y|x)$. 
\begin{lem}
In a multihop mmWave AP network, assume $x \in \phi_i$ and $y \in \Phi_{i+1}$ to be a pair of transmitter and receiver at the $i+1^{\text{th}}$ hop.	
If $\phi_i$ is a PPP with intensity $\lambda_i$, then $y$ is isotropically distributed around $x$ with the conditional PDF:
	\begin{eqnarray}
      f_i(y|x) &= &
      f_{\text{L}}(y|x) e^{-2 \pi {\lambda_i} \int_{0}^{|y-x|^{\alpha_{\text{L}}/\alpha_{\text{N}}}}P_{\text{N}}(r)r \, \text{d}r} 
      \;\;+ \;  f_{\text{N}}(y|x)e^{-2 \pi {\lambda_i} \int_{0}^{|y-x|^{\alpha_{\text{N}}/\alpha_{\text{L}}}}P_{\text{L}}(r)r \, \text{d}r},  \label{eq: probability of connecting to distance z} 
   \end{eqnarray}
   where
   \begin{eqnarray}
	  f_{\text{L}}(y|x) = 2 \pi |y-x| {\lambda_i} P_{\text{L}}(|y-x|) e^{-2 \pi {\lambda_i} \int_{0}^{|y-x|}P_{\text{L}}(r)rdr} \; , \nonumber\\
	  f_{\text{N}}(y|x) = 2 \pi |y-x| {\lambda_i} P_{\text{N}}(|y-x|) e^{-2 \pi {\lambda_i} \int_{0}^{|y-x|}P_{\text{N}}(r)rdr}. \nonumber
	\end{eqnarray}
	Here, the constant $\alpha_{\text{L}}$ and $\alpha_{\text{N}}$ represent the path loss exponent for line-of-sight (LOS) and non-line-of-sight (NLOS) mmWave link, respectively. $P_{\text{L}}(r)$ refers to the probability with that a mmWave link of length $r$ is LOS. It follows the NLOS probability $P_{\text{N}}(r) = 1 - P_{\text{L}}(r)$.
	\label{Lemma - sub-tier with ki=1}
\end{lem}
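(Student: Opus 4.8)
The plan is to reduce the maximum-average-power association to two independent nearest-neighbor problems through an independent thinning of the PPP $\phi_i$. Since a mmWave link of length $r$ is LOS with probability $P_{\text{L}}(r)$ independently of everything else, the marking theorem for Poisson processes splits $\phi_i$ into two independent inhomogeneous PPPs about $x$: the LOS points $\phi_i^{\text{L}}$ with intensity $\lambda_i P_{\text{L}}(|z-x|)$ and the NLOS points $\phi_i^{\text{N}}$ with intensity $\lambda_i P_{\text{N}}(|z-x|)$. Because the average received power from a point at distance $r$ is strictly decreasing in $r$ within each class ($\propto r^{-\alpha_{\text{L}}}$ for LOS and $\propto r^{-\alpha_{\text{N}}}$ for NLOS, up to a common path-loss intercept), the best LOS candidate is the nearest point of $\phi_i^{\text{L}}$, at distance $R_{\text{L}}$, and the best NLOS candidate is the nearest point of $\phi_i^{\text{N}}$, at distance $R_{\text{N}}$; by construction $R_{\text{L}}$ and $R_{\text{N}}$ are independent.

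Next I would compute the void probabilities. Since $\phi_i^{\text{L}}$ is Poisson, $\Pr(R_{\text{L}} > r) = \exp\!\big(-\lambda_i \int_{B(x,r)} P_{\text{L}}(|z-x|)\,\mathrm{d}z\big) = \exp\!\big(-2\pi\lambda_i \int_0^r P_{\text{L}}(u)u\,\mathrm{d}u\big)$, which differentiates to the radial density $f_{\text{L}}$ stated in the lemma; the analogous computation on $\phi_i^{\text{N}}$ yields $f_{\text{N}}$. Isotropy is already clear at this stage: both $\phi_i$ and the LOS/NLOS thinning probabilities are rotation-invariant about $x$, so conditioned on the serving distance the serving AP's angle is uniform on the corresponding circle, and therefore $y$ is isotropically distributed about $x$.

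Then I would merge the two classes. The serving AP is the LOS candidate precisely when $R_{\text{L}}^{-\alpha_{\text{L}}} > R_{\text{N}}^{-\alpha_{\text{N}}}$, equivalently $R_{\text{N}} > R_{\text{L}}^{\alpha_{\text{L}}/\alpha_{\text{N}}}$, and it is the NLOS candidate when $R_{\text{L}} > R_{\text{N}}^{\alpha_{\text{N}}/\alpha_{\text{L}}}$; these events are complementary up to a null set. Conditioning on $\{R_{\text{L}} = r\}$ and invoking the independence of $R_{\text{L}}$ and $R_{\text{N}}$, the density of the event ``$y$ is served over a LOS link of length $r$'' is $f_{\text{L}}(y|x)\,\Pr(R_{\text{N}} > r^{\alpha_{\text{L}}/\alpha_{\text{N}}}) = f_{\text{L}}(y|x)\exp\!\big(-2\pi\lambda_i\int_0^{r^{\alpha_{\text{L}}/\alpha_{\text{N}}}} P_{\text{N}}(u)u\,\mathrm{d}u\big)$, and symmetrically for the NLOS case with $\alpha_{\text{L}},\alpha_{\text{N}}$ and $P_{\text{L}},P_{\text{N}}$ interchanged. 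Adding the two disjoint contributions gives \eqref{eq: probability of connecting to distance z}.

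The main obstacle I anticipate is not any single computation but keeping the bookkeeping honest: justifying the independence of the LOS and NLOS nearest-neighbor distances via the marking theorem, being explicit that ``maximum average power'' collapses to ``nearest point within each class'' (which is where the common path-loss intercept is used, so I would state that normalization), and arranging the comparison $R_{\text{L}}^{-\alpha_{\text{L}}}$ versus $R_{\text{N}}^{-\alpha_{\text{N}}}$ cleanly so that the cross-exponents $\alpha_{\text{L}}/\alpha_{\text{N}}$ and $\alpha_{\text{N}}/\alpha_{\text{L}}$ land in the right integration limits. A minor point is whether $f_i(y|x)$ denotes a planar density or the density of the link length; I would carry out the derivation for the link-length distribution and note that the planar density differs only by the Jacobian factor $2\pi|y-x|$.
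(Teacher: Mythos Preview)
Your proposal is correct and follows essentially the same route as the paper: split $\phi_i$ into independent LOS and NLOS inhomogeneous PPPs by thinning, obtain $f_{\text{L}}$ and $f_{\text{N}}$ as nearest-neighbor densities via void probabilities, and then merge the two classes through the power comparison $R_{\text{L}}^{-\alpha_{\text{L}}}$ versus $R_{\text{N}}^{-\alpha_{\text{N}}}$, which produces the cross-exponent limits. The paper's own proof is less explicit---it frames the merging step via Bayes' rule and delegates the details to \cite{bai2015coverage}---so your write-up is, if anything, more self-contained; your caveat about the radial-versus-planar Jacobian is also a useful clarification the paper leaves implicit.
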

 \begin{proof}
	We start to derive the PDF of the distance from $y \in \Phi_{i+1}$ to its nearest LOS AP in $\phi_{i}$. Without loss of generality, let $y$ be the origin of the coordinate system. Denote the disc of radius $z$ centered at the origin as $\mathcal{D}^z_o$. Following the network model, the APs in $\phi_{i}$ which are LOS to the origin form an inhomogeneous PPP $\phi_\text{iL}$ with density $\lambda_{\text{iL}}(z) = \lambda_{\text{i}} P_{\text{L}}(z)$.  Thus the null probability of $\phi_{\text{iL}}$ in $\mathcal{D}^z_o$ is given by $\mathbb{P}(\phi_{\text{iL}} \cap \mathcal{D}^z_o=\emptyset) = e^{-2\pi \int_{0}^{z} {x \lambda_{\text{iL}}(r)} \text{d}r}$. 
	The PDF of $z$ for $\phi_{\text{iL}}$ can then be derived as
	\begin{equation}
	f_{\text{L}}(z) = \frac{\text{d} (1 - \mathbb{P}(\phi_{\text{iL}} \cap \mathcal{D}^z_o=\emptyset)) }{\text{d} z } = 2 \pi z \lambda_i P_{\text{L}}(z) e^{-2 \pi \lambda_i \int_{0}^{z}P_{\text{L}}(r)rdr}.
	\label{proofeq: f_L(z)}
	\end{equation}
	The PDF of $z$ for NLOS APs in $\phi_{i}$ i.e. $\phi_{\text{iN}}$ can be obtained by the same steps, where
    \begin{equation}
	f_{\text{N}}(z) = \frac{\text{d} (1 - \mathbb{P}(\phi_{\text{iN}} \cap \mathcal{D}^z_o=\emptyset)) }{\text{d} z } = 2 \pi z \lambda_i P_{\text{N}}(z) e^{-2 \pi \lambda_i \int_{0}^{z}P_{\text{N}}(r)rdr}.
	\label{proofeq: f_N(z)}
	\end{equation}
	
	Following the Bayes' rule, we have $\mathbb{P}\left( x_0 \in \phi_{\text{iL}} \;\boldsymbol{|}\; |x_0| = z \right) \propto {f( |x_0| = z \; \boldsymbol{|} \; x_0\in\phi_{\text{iL}}) \mathbb{P}(x_0 \in \phi_{\text{iL}})}$ and $\mathbb{P}\left( x_0 \in \phi_{\text{iN}} \;\boldsymbol{|}\; |x_0| = z \right) = {f( |x_0| \propto z \; \boldsymbol{|} \; x_0\in\phi_{\text{iN}}) \mathbb{P}(x_0 \in \phi_{\text{iN}})}$. 
	The  deductions of $\mathbb{P}(x_0 \in \phi_{\text{iL}})$ and $\mathbb{P}(x_0 \in \phi_{\text{iN}})$ follow the similar steps in \cite[Theorem 2]{bai2015coverage}. In \cite[Theorem 3]{bai2015coverage}, $f( |x_0| = z \; \boldsymbol{|} \; x_0\in\phi_{\text{iL}}) $ and $f( |x_0| = z \; \boldsymbol{|} \; x_0\in\phi_{\text{iN}}) $ are derived as a function of $f_\text{L}(z)$ in (\ref{proofeq: f_L(z)}) and $f_\text{N}(z)$ in (\ref{proofeq: f_N(z)}). 
\end{proof}
Lemma \ref{Lemma - sub-tier with ki=1} implies that each point $y \in \Phi_{i+1}$ can be considered as the isotropic displacement of some point $x \in \phi_i$, where the distance between $y$ and $x$ follows the PDF $f_i(y|x)$ in (\ref{eq: probability of connecting to distance z}). It follows from \cite{haenggi2012stochastic} that  if $\phi_i$ is a homogeneous PPP, then $\Phi_{i+1}$ is a PPP with the same intensity of $\phi_{i}$. 

Consider a mmWave AP network of density $\Lambda_{\text{A}}$, where all APs are scheduled to transmit to the following tier i.e. $\phi_i = \Phi_i$, $\forall i$. By repeatedly using the displacement property of the PPP \cite{haenggi2012stochastic}, the topology of mmWave AP network can then be written as $\{\Phi_i\}_{i=0}^M$, where $\Phi_{i}$ is a homogeneous PPP of intensity $\Lambda_0$, $\forall i$. It follows that the total number of transmission hops $M = \frac{\Lambda_{\text{A}}}{\Lambda_0}-1$.
Note that $\bigcup_{i=0}^M \Phi_i$ is the superposition of $M+1$ homogeneous PPPs, thus is also a homogeneous PPP. It implies that the mmWave APs are uniformly distributed in the network if the spatial multiplexing at APs is disabled by the transmission protocol.

\begin{figure}
	\vspace*{-15pt}
	\centering
	\subfloat[{\scriptsize The total number of transmission hops equal to $M = 12$. The spatial multiplexing gain for each hop $k_i = 1$.} \label{Figure: AP topology with ki = 1}]{%
		\includegraphics[width=0.48\linewidth]{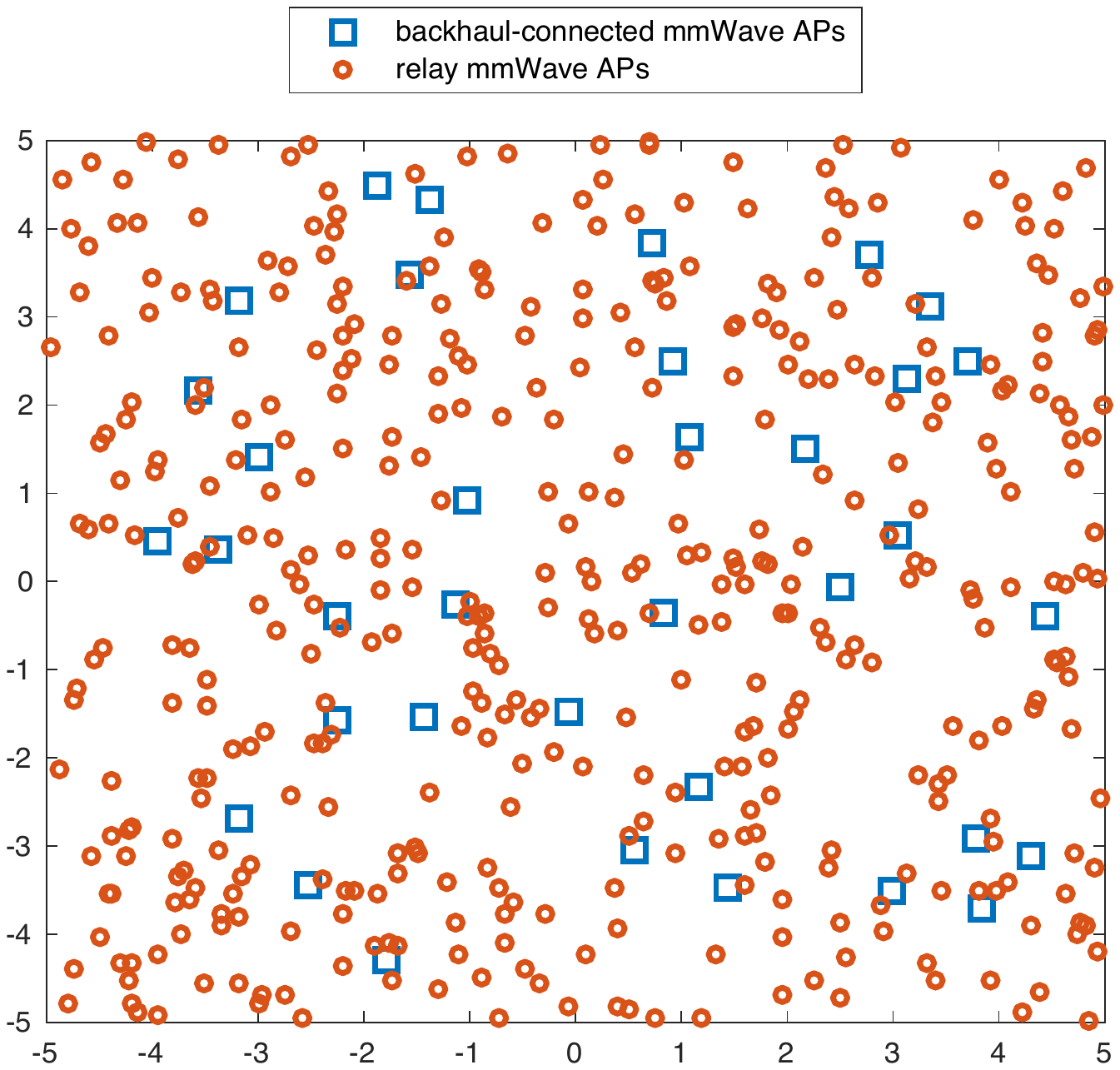}}
	\hfill
	\subfloat[{\scriptsize The total number of transmission hops equal to $M = 2$. The spatial multiplexing gain for each hop $k_i = 6$.}\label{Figure: AP topology with ki = 6}]{%
		\includegraphics[width=0.48\linewidth]{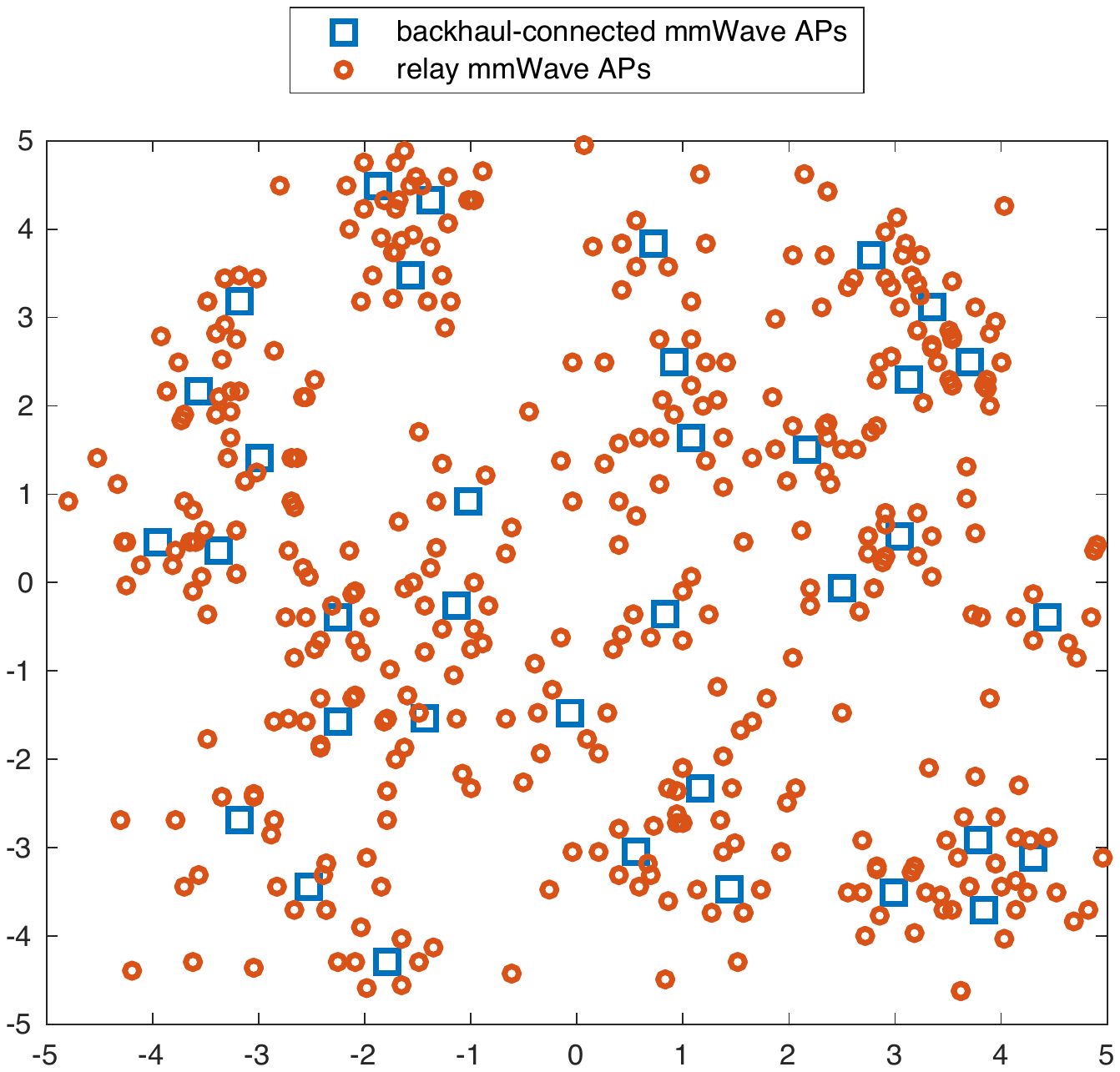}}
	\caption{The spatial distribution of mmWave APs in the network with spatial multiplexing disabled (left) and enabled (right).}
	\label{Figure: AP topology} 
	 \vspace*{-15pt}
\end{figure}
	
\subsection{Topology with Spatial Multiplexing Enabled}
By enabling the spatial multiplexing of APs in $\phi_i$, the cluster $\mathcal{B}_i^x, x\in\phi_{i}$ becomes a sequence of i.i.d. points $\{y_1,\cdots,y_{k_i}\}$ with the PDF $f_i(y_j|x)$. 
Following Lemma \ref{Lemma - sub-tier with ki=1}, if $\phi_i$ is a PPP of intensity $\lambda_i$, then $y_j$ is isotropically located around $x$ with $f_i(y_j|x)$ in (\ref{eq: probability of connecting to distance z}).
We remark that the location distribution of $y_j$ only depends on $x$, which is irrelevant to the size of $\mathcal{B}_i^x$. 

For tier $i+1$, the mmWave APs are spatially distributed following the point process $\Phi_{i+1} = \bigcup_{x\in \phi_i}\;\mathcal{B}_i^x$. Given that $\phi_i$ is a PPP of intensity $\lambda_i$, we then have that $\Phi_{i+1}$ is a Poisson cluster process (PCP), more specifically, a Neyman-Scott process with intensity $\Lambda_{i+1} = k_i \lambda_i$ \cite{haenggi2012stochastic}.

Note that $\Phi_{i}$ is a collection of clusters $\mathcal{B}_{i-1}^x$ with size $k_{i-1}$. Let $\phi_i$ be generated by taking one point from each cluster $\mathcal{B}_{i-1}^x, x\in\phi_{i-1}$. It follows from Lemma \ref{Lemma - sub-tier with ki=1} that if $\phi_{i-1}$ is a PPP, then $\phi_{i}$ is a PPP, where $\phi_{i}$ and $\phi_{i-1}$ are of the same intensity. Note that $\Phi_0$ represents the transmitters at hop 1, which is a PPP of intensity $\Lambda_0$. Therefore, each $\phi_{i}$ is a PPP with intensity $\Lambda_0$. It follows that $\Phi_{i+1}$ is a Neyman-Scott point process with intensity $ k_i\Lambda_0$ for all $i$.

In Fig.\ref{Figure: AP topology with ki = 6}, we illustrate the spatial distribution of mmWave APs with spatial multiplexing enabled. Note that the mmWave AP networks in Fig.\ref{Figure: AP topology with ki = 1} and Fig.\ref{Figure: AP topology with ki = 6} are of the same density $\Lambda_{\text{A}}$. Moreover, tier 0 is assumed to be identical for the two AP networks. However, the spatial multiplexing is disabled in Fig.\ref{Figure: AP topology with ki = 1}. It can be observed from Fig.\ref{Figure: AP topology} that given $k_i = 1, \forall i$, the mmWave relays are uniformly distributed in the area. By enabling the spatial multiplexing, the locations of mmWave relays become clustering. Such a clustering pattern is consistent with the distribution of Neyman-Scott point process.

\section{Performance of mmWave AP Network}
To characterize the impact of spatial multiplexing on the throughput, we first derive the latency and coverage probability for the mmWave AP network. Note that the latency of mmWave AP network indicates the delay of packet transmission. While the reliability of packet transmission can be captured by the coverage probability.

 \subsection{Network Latency}
In a mmWave AP network, tier $0$ is always the source tier for a packet, whereas the destination of the packet can be a mmWave AP in any tier. 
Note that the delay of a packet depends on the hop count from the source AP to the destination AP. Accordingly, the worst-case delay of a packet is equivalent to the maximum hop count of APs in the network. It follows that the latency of a mmWave AP network can be defined as the total number of transmission hops $M$ in the network.
\begin{thm}
	For a mmWave AP network of density $\Lambda_{\text{A}}$, the latency $M$ is bounded by 
	\begin{equation}
	 \frac{\Lambda_{\text{A}}}{K\Lambda_0} - \frac{1}{K} \leq M \leq \frac{\Lambda_{\text{A}}}{\Lambda_0} - 1,
	\end{equation}
	where $\Lambda_0$ is the intensity of tier 0. Each mmWave AP is equipped with $K$ RF chains, thus $K$ represents the maximum spatial multiplexing gain for each transmission hop.
\label{thm: network latency}	
\end{thm}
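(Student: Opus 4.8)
The plan is to reduce the statement to a density-bookkeeping identity and then optimize over the admissible per-hop spatial multiplexing gains. First I would note that tier $i$ is, by definition, the set of APs with hop count exactly $i$, so the tiers $\Phi_0,\Phi_1,\dots,\Phi_M$ form a disjoint and exhaustive partition of the AP point set; consequently the network intensity is the sum of the tier intensities, $\Lambda_{\text{A}}=\sum_{i=0}^{M}\Lambda_i$. From the topology analysis of Section III, $\Phi_0$ is a PPP of intensity $\Lambda_0$, and for every choice of the per-hop gains $k_0,\dots,k_{M-1}$ the tier $\Phi_{i+1}$ is a Neyman--Scott process of intensity $k_i\Lambda_0$. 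Hence
\[
\Lambda_{\text{A}} \;=\; \Lambda_0 + \sum_{i=0}^{M-1} k_i\Lambda_0 \;=\; \Lambda_0\Bigl(1+\sum_{i=0}^{M-1}k_i\Bigr).
\]

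Next I would invoke the hardware constraint: each mmWave AP carries $K$ RF chains, so every hop satisfies $1\le k_i\le K$, and therefore $M \le \sum_{i=0}^{M-1}k_i \le MK$. Substituting these two bounds into the identity above gives $\Lambda_0(1+M)\le \Lambda_{\text{A}}\le \Lambda_0(1+MK)$. Solving the left inequality for $M$ yields $M\le \frac{\Lambda_{\text{A}}}{\Lambda_0}-1$, and solving the right one yields $M\ge \frac{\Lambda_{\text{A}}}{K\Lambda_0}-\frac{1}{K}$, which is precisely the claimed two-sided bound.

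Finally I would remark that both ends are attained by extreme protocol choices: setting $k_i=1$ for all $i$ recovers the spatial-multiplexing-disabled configuration already characterized in Section III and meets the upper bound, while setting $k_i=K$ for all $i$ meets the lower bound. I do not anticipate any real obstacle here; the only point requiring care is justifying that the tier decomposition is disjoint and exhaustive so that the first-moment (intensity) measures add, after which the result is the one-line algebra above. (If one insists on integer $M$, the bounds hold a fortiori, since $M$ is a nonnegative integer sandwiched between the two real expressions.)
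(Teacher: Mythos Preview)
Your proposal is correct and follows essentially the same approach as the paper: the paper's proof simply records the density identity $\sum_{i=1}^{M}\Lambda_i=\Lambda_{\text{A}}-\Lambda_0$, notes $\Lambda_0\le\Lambda_i\le K\Lambda_0$, and says the bound is immediate. Your version is a more explicit unpacking of the same two ingredients (the intensity bookkeeping and the per-hop constraint $1\le k_i\le K$), together with the attainability remark that the paper states in the discussion following the theorem.
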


\begin{proof}
	The network latency $M$ satisfies
	$\sum_{i=1}^{M} \Lambda_{i} = \Lambda_{\text{A}} - \Lambda_0$. Note that $ \Lambda_0 \leq \Lambda_{i} \leq K\Lambda_0$, the result immediately follows.
\end{proof}

Theorem \ref{thm: network latency} demonstrates that the latency of mmWave AP network decreases linearly with the increase of spatial multiplexing gain. Note that the network latency reaches its upper bound when the spatial multiplexing is disabled. By setting the spatial multiplexing gain to $K$ for each transmission hop, the minimum latency of mmWave AP network can be achieved.

\subsection{Coverage probability}
To calculate the coverage probability for tier $i+1$, we need to first derive the signal-to-interference-noise ratio (SINR) for APs in $\Phi_{i+1}$. Note that $\phi_{i}$ and $\Phi_{i+1}$ represent the transmitters and receivers of the $i+1^{\text{th}}$ transmission hop, respectively.
At hop $i+1$, the hybrid precoding is employed by two stages \cite{andrews2017modeling}. 
In the first stage, the mmWave AP located at $x \in \phi_{i}$ assigns a unique analog beam to each AP in $\mathcal{B}_i^x$. Denote $\theta_{\text{A}}$ as the main lobe width of the analog beam, where the beamforming gains within and outside the main lobe are denoted by $G_{\text{A}}$ and $g_{\text{A}}$, respectively. We use $\mathbb{G}(k_i)$ to denote the transceiver beamforming gain between two APs.
It follows that $\mathbb{G}(k_i)$ equals to $G_{\text{A}}^2$, $G_{\text{A}}g_{\text{A}}$ and $g_{\text{A}}^2$ with probabilities $\left( \frac{\theta_{\text{A}}k_i}{2\pi} \right)^2$, $\frac{\theta_{\text{A}}k_i}{\pi} \left( 1 - \frac{\theta_{\text{A}}k_i}{2\pi} \right) $ and $ \left( 1 - \frac{\theta_{\text{A}}k_i}{2\pi} \right)^2$, respectively. 

In the second stage of hybrid precoding, the spatial multiplexing is performed in the baseband, where the AP at $x \in \phi_{i}$ transmits a different data stream for each $y \in \mathcal{B}_i^x$ as well as cancels the inter-stream interference. 
Let a randomly selected AP at $y \in \Phi_{i+1}$ be the origin of the coordinate system, where $y$ belongs to the cluster $\mathcal{B}_i^x$. It follows that the coordinate of $x$ is translated to $x_0 = x - y$. The SINR of the AP at the origin can then be expressed as
\begin{eqnarray}
\text{SINR}(k_i) \triangleq \frac{h_0 G_{\text{A}}^2 \ell(|x_0|)}{\sigma^2 + \mathcal{I}_i(k_i)}  
 = \frac{h_0 G_{\text{A}}^2 \ell(|x_0|)}{\sigma^2 + \sum_{x_b\in\phi_{i}\setminus\{x_0\}}h_b\mathbb{G}_b(k_i)\ell(|x_b|)},
\label{def: SINR distribution}
\end{eqnarray}
where $h_b$ represents the channel fading from $x_b$ to the origin; $\mathbb{G}_b(k_i)$ is the transceiver beamforming gain between $x_b$ and the origin; $\sigma^2$ denotes the noise power; $\ell(\cdot)$ denotes the path loss of the mmWave link \cite{andrews2017modeling} 
\begin{equation}
\ell(r) = 
\left\lbrace 
\begin{split}
& \beta r^{-\alpha_{\text{L}}}, \text{with probability}\;P_{\text{L}}(r)\\
& \beta r^{-\alpha_{\text{N}}}, \text{with probability}\;P_{\text{N}}(r)
\end{split},
\right.
\label{eq: path loss model} 
\end{equation}
where $\beta$ is a constant representing the intercept of path loss model \cite{andrews2017modeling}; the LOS probability $P_{\text{L}}(r)$ and NLOS probability $P_{\text{N}}(r)$ are introduced in Lemma \ref{Lemma - sub-tier with ki=1}.

As discussed in Section III, $\phi_{i}$ is formed by taking one point from each cluster of $\Phi_{i}$. Thus, $\phi_{i}$ is always a PPP with intensity $\Lambda_0$ regardless of the intensity of $\Phi_{i}$. It means that the SINR distribution in (\ref{def: SINR distribution}) depends only on $k_i$ and $\tau$.
Based on (\ref{def: SINR distribution}), the coverage probability of tier $i+1$ is defined as 
\begin{equation}
\mathcal{C}(\tau,k_i) \triangleq \mathbb{P}(\text{SINR}(k_i)>\tau)
\label{def: P_cov}
\end{equation}
with that the AP in tier $i+1$ has a SINR larger than the threshold $\tau$.
To calculate the coverage probability, we first provide the characteristic function of the interference at the origin. Given that the AP at the origin is connected to a LOS AP located at $x_0$, the characteristic function can be written as
\begin{eqnarray}
\mathcal{L}_{\mathcal{I}_\text{L}}(s,k_i) 
&= & \exp \left( - 2\pi {\Lambda_0} \int_{|x_0|}^{\infty} \left[
1-\mathcal{G}_{\text{L}}(s,r,k_i)
\right]  P_{\text{L}}(r)r \; \text{d}r \right) \nonumber\\
&&\;\;\times  \exp \left( - 2\pi {\Lambda_0} \int_{|x_0|^{\alpha_{\text{L}}/ \alpha_{\text{N}}}}^{\infty} \left[
1-\mathcal{G}_{\text{N}}(s,r,k_i)
\right]   P_{\text{N}}(r)r \; \text{d}r \right), 
\label{eq: Laplace transform of LOS}
\end{eqnarray}
where $s$ is the value on that the characteristic function is evaluated and $\mathcal{G}_{\text{L}}(s,r,k_i) = \mathbb{E}_{h,\mathbb{G}} 
\left[
e^{- s \beta h \mathbb{G}(k_i) r^{-\alpha_{\text{L}}}} 
\right]$.

If the AP at $x_0$ is in the NLOS state, the characteristic function of the interference is given as
\begin{eqnarray}
\mathcal{L}_{\mathcal{I}_\text{N}}(s, k_i) 
&=& \exp \left( - 2\pi {\Lambda_0} \int_{|x_0|}^{\infty} \left[
1-\mathcal{G}_{\text{N}}(s,r,k_i)
\right]   P_{\text{N}}(r)r \; \text{d}r \right) \nonumber\\
&&\;\;\times  \exp \left( - 2\pi {\Lambda_0} \int_{{|x_0|}^{\alpha_{\text{N}}/ \alpha_{\text{L}}}}^{\infty} \left[
1-\mathcal{G}_{\text{L}}(s,r,k_i)
\right]  P_{\text{L}}(r)r \; \text{d}r \right)
\label{eq: Laplace transform of NLOS}
\end{eqnarray}
\label{lem: Laplace Transform}
with
$\mathcal{G}_{\text{N}}(s,r,k_i) = \mathbb{E}_{h,\mathbb{G}} 
\left[
e^{- s \beta h \mathbb{G}(k_i) r^{-\alpha_{\text{N}}}} 
\right]$.

Next, we show the main result on the coverage probability of tier $i+1$ for a mmWave AP network.
\begin{thm}
	In a mmWave network, the coverage probability of a randomly selected AP in tier $i+1$ is given by
	\begin{eqnarray}
	\mathcal{C}(\tau,k_i) 
	&= &\int\limits_{r>0}
	\frac{ f_{\text{L}}(r) f_{\text{N}}(r^{\alpha_{\text{L}}/\alpha_{\text{N}}})}
	{2\pi r^{\alpha_{\text{L}}/\alpha_{\text{N}}} {\Lambda_0}
		 P_{\text{N}}(r^{\alpha_{\text{L}}/\alpha_{\text{N}}})}
	\mathcal{C}_{\text{L}}(\tau,r,k_i)
	\; \text{d} r  \nonumber\\
	&&\quad + 
	\int\limits_{r>0}
	\frac{ f_{\text{L}}(r^{\alpha_{\text{N}}/\alpha_{\text{L}}}) f_{\text{N}}(r)}
	{2\pi r^{\alpha_{\text{N}}/\alpha_{\text{L}}} {\Lambda_0} P_{\text{L}}(r^{\alpha_{\text{N}}/\alpha_{\text{L}}})}  
	\mathcal{C}_{\text{N}}(\tau,r,k_i)
	\; \text{d} r, 
	\label{eq: SINR interms of conditional SINR}
	\end{eqnarray}
	where $k_i$ is the spatial multiplexing gain at hop $i$; $f_{\text{L}}(\cdot)$ and $f_{\text{N}}(\cdot)$ are the PDF of distance distributions given in (\ref{eq: probability of connecting to distance z}).  
	Assume that channel follows the Rayleigh fading model, then
	\begin{align}
	\mathcal{C}_{\text{L}}(\tau,r,k_i) &= \exp \left( - {\frac{r^{\alpha_{\text{L}}}\tau \sigma^2}{G_{\text{A}}^2 \beta}} \right)
	\mathcal{L}_{\mathcal{I}_\text{L}} \left( \frac{r^{\alpha_{\text{L}}}\tau}{G_{\text{A}}^2 \beta}, k_i \right),
	\nonumber\\
	\mathcal{C}_{\text{N}}(\tau,r,k_i) &= \exp \left( - {\frac{r^{\alpha_{\text{N}}}\tau \sigma^2}{G_{\text{A}}^2 \beta}} \right)
	\mathcal{L}_{\mathcal{I}_\text{N}} \left( {\frac{r^{\alpha_N}\tau}{G_{\text{A}}^2\beta}}, k_i \right).
	\nonumber
	\end{align}
	\label{thm: SINR coverage probability}
	\vspace*{-10pt}
\end{thm}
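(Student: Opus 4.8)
The plan is to evaluate $\mathcal{C}(\tau,k_i)$ by conditioning on the serving link of the typical tier-$(i+1)$ receiver and then applying the law of total probability over whether that link is LOS or NLOS and over its length. First I would reduce the typical AP of $\Phi_{i+1}$ to a tractable object: by the cluster construction of Section III (all clusters $\mathcal{B}_i^x$ share the common size $k_i$ and are i.i.d. isotropic displacements of their parent $x\in\phi_i$), a uniformly chosen AP of $\Phi_{i+1}$ can be placed at the origin $y=0$ with its parent $x$ at $x_0=x-y$ distributed according to the isotropic law of Lemma \ref{Lemma - sub-tier with ki=1} (with $\lambda_i$ replaced by $\Lambda_0$), while the remaining points of $\phi_i$ still form a homogeneous PPP of intensity $\Lambda_0$ by Slivnyak's theorem. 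Because $y$ is attached to the AP of $\phi_i$ delivering the largest average received power, the conditioning additionally forbids any AP of $\phi_i$ that would be ``stronger'' than $x_0$; when $x_0$ is LOS at distance $r$ this excludes LOS APs within radius $r$ and NLOS APs within radius $r^{\alpha_{\text{L}}/\alpha_{\text{N}}}$, and symmetrically when $x_0$ is NLOS.

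Next I would pin down the serving-distance distribution. Split $\phi_i$ into its independent LOS and NLOS thinnings and let $X_{\text{L}},X_{\text{N}}$ denote the distances from the origin to the nearest LOS and nearest NLOS AP, with densities $f_{\text{L}},f_{\text{N}}$ as in (\ref{proofeq: f_L(z)})--(\ref{proofeq: f_N(z)}) (with intensity $\Lambda_0$). The serving link is LOS of length $r$ iff $X_{\text{L}}=r$ and $\beta X_{\text{L}}^{-\alpha_{\text{L}}}\ge\beta X_{\text{N}}^{-\alpha_{\text{N}}}$, i.e.\ $X_{\text{N}}\ge r^{\alpha_{\text{L}}/\alpha_{\text{N}}}$, so its density is $f_{\text{L}}(r)\,\mathbb{P}(X_{\text{N}}\ge r^{\alpha_{\text{L}}/\alpha_{\text{N}}})$. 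Using the identity $\mathbb{P}(X_{\text{N}}\ge t)=e^{-2\pi\Lambda_0\int_0^t P_{\text{N}}(u)u\,\text{d}u}=f_{\text{N}}(t)/\big(2\pi t\Lambda_0 P_{\text{N}}(t)\big)$ at $t=r^{\alpha_{\text{L}}/\alpha_{\text{N}}}$ reproduces exactly the first weighting factor in (\ref{eq: SINR interms of conditional SINR}); swapping the roles of $\alpha_{\text{L}}$ and $\alpha_{\text{N}}$ handles the NLOS serving case and yields the second factor.

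Then I would compute the conditional coverage probability. Given a LOS serving link of length $r$, $\text{SINR}(k_i)=h_0 G_{\text{A}}^2\beta r^{-\alpha_{\text{L}}}/(\sigma^2+\mathcal{I}_i)$ by (\ref{def: SINR distribution}); with Rayleigh fading $h_0\sim\exp(1)$ one gets $\mathbb{P}(\text{SINR}(k_i)>\tau\mid\cdot)=\mathbb{E}\big[e^{-s(\sigma^2+\mathcal{I}_i)}\big]$ with $s=r^{\alpha_{\text{L}}}\tau/(G_{\text{A}}^2\beta)$, which factors into the noise term $e^{-r^{\alpha_{\text{L}}}\tau\sigma^2/(G_{\text{A}}^2\beta)}$ times $\mathcal{L}_{\mathcal{I}_\text{L}}(s,k_i)$, i.e.\ $\mathcal{C}_{\text{L}}(\tau,r,k_i)$. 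The Laplace transform $\mathcal{L}_{\mathcal{I}_\text{L}}$ itself follows from the PGFL of the PPP $\phi_i$: decompose the interferers into the LOS thinning (intensity $\Lambda_0 P_{\text{L}}(\cdot)$, supported on $\rho>|x_0|=r$) and the NLOS thinning (intensity $\Lambda_0 P_{\text{N}}(\cdot)$, supported on $\rho>|x_0|^{\alpha_{\text{L}}/\alpha_{\text{N}}}$), the lower limits being precisely the exclusion discs identified above, and average over the i.i.d.\ marks $h_b$ and $\mathbb{G}_b(k_i)$; this gives (\ref{eq: Laplace transform of LOS}), and the NLOS serving case gives (\ref{eq: Laplace transform of NLOS}) analogously. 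Substituting $s$ into these expressions recovers the stated $\mathcal{C}_{\text{L}},\mathcal{C}_{\text{N}}$, and integrating the conditional coverage against the serving-distance density and summing the LOS and NLOS contributions yields (\ref{eq: SINR interms of conditional SINR}).

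I expect the delicate step to be the reduction in the first paragraph: one must argue carefully that although $\Phi_{i+1}$ is a Neyman--Scott (hence non-Poisson) process, the field that actually generates interference --- the transmitter set $\phi_i$ --- is a PPP of intensity $\Lambda_0$ independent of $k_i$ (as established in Section III), and that the reduced Palm distribution of $\phi_i$ seen from the typical receiver together with its maximum-power server is just the original PPP with the two class-dependent discs deleted. Care is also needed to justify that selecting the AP of largest average received power is equivalent to comparing the nearest LOS AP with the nearest NLOS AP, and that the LOS and NLOS thinnings of $\phi_i$ are independent, so that the survival probability of one class factors out of the serving-distance density as cleanly as the final formula requires.
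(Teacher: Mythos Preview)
Your proposal is correct and follows essentially the same approach as the paper: condition on the type (LOS/NLOS) and length of the serving link, use Rayleigh fading to reduce the conditional coverage to a Laplace transform of noise plus interference, and evaluate the latter via the PGFL of the PPP $\phi_i$ with the appropriate exclusion radii. If anything, you are more explicit than the paper about the Palm/Slivnyak reduction and about the identity $\mathbb{P}(X_{\text{N}}\ge t)=f_{\text{N}}(t)/\big(2\pi t\Lambda_0 P_{\text{N}}(t)\big)$ that produces the weighting factors in (\ref{eq: SINR interms of conditional SINR}); the paper simply quotes the joint ``densities'' $\mathbb{P}(x_0\in\phi_{i\text{L}}\mid |x_0|=r)=f_{\text{L}}(r)\,e^{-2\pi\Lambda_0\int_0^{r^{\alpha_{\text{L}}/\alpha_{\text{N}}}}P_{\text{N}}(z)z\,\text{d}z}$ from Lemma~\ref{Lemma - sub-tier with ki=1} and refers to \cite{andrews2011tractable,haenggi2012stochastic} for the Laplace-transform step.
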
 

\begin{proof}
	For the AP of $\Phi_{i+1}$ at the origin, the SINR coverage probability can be written as
	\begin{equation}
	\mathcal{C}(\tau,k_i) \triangleq \mathbb{P}(\text{SINR}(k_i)>\tau) = \mathbb{E}_{r}\left[\mathcal{C} (\tau,k_i,r)\right], \nonumber
	\end{equation}
	where $\mathcal{C} (\tau,k_i,r) \triangleq \mathbb{P}(\text{SINR}(k_i)>\tau\; \boldsymbol{|} \; |x_0| = r)$ is the conditional SINR coverage probability. 
	According to the LOS AP or NLOS AP at $x_0$, $\mathcal{C} (\tau,k_i,r)$ can be further expanded as
	\begin{eqnarray}
	\mathcal{C} (\tau,k_i,r) 
	=\mathbb{P}\left( x_0 \in \phi_{\text{iL}} \;\boldsymbol{|}\; |x_0| = r \right)  \mathcal{C}_{\text{L}} (\tau,k_i,r)
	+ \mathbb{P}\left( x_0 \in \phi_{\text{iN}} \; \boldsymbol{|}\;  |x_0| = r \right) \mathcal{C}_{\text{N}} (\tau,k_i,r),
	\nonumber 
	\end{eqnarray}
	where 
	\begin{eqnarray}
	\mathcal{C}_{\text{L}} (\tau,k_i,r)&\triangleq& \mathbb{P}\left( \text{SINR} > \tau \; \boldsymbol{|} \; |x_0| = r, x_0\in\phi_{\text{iL}} \right) 
	=  \mathbb{P}\left(h_0 > \frac{\tau(\sigma^2 + \mathcal{I}_\text{L})}{G_{\text{A}}G_{\text{U}} \beta r^{-\alpha_{\text{L}}}} \right) \nonumber
	\end{eqnarray}
	and
	\begin{eqnarray}
	\mathcal{C}_{\text{N}} (\tau,k_i,r)&\triangleq& \mathbb{P}\left( \text{SINR} > \tau \; \boldsymbol{|} \; |x_0| = r, x_0\in\phi_{\text{iN}} \right) 
	= \mathbb{P}\left( h_0 > \frac{\tau(\sigma^2 + \mathcal{I}_\text{N})}{G_{\text{A}}G_{\text{U}} \beta r^{-\alpha_{\text{N}}}} \right). \nonumber
	\end{eqnarray}
	The conditional probabilities $\mathcal{C}_{\text{L}} (\tau,k_i,r)$ and $\mathcal{C}_{\text{N}} (\tau,k_i,r)$ can be expressed as a function of the characteristic functions in (\ref{eq: Laplace transform of LOS}) and (\ref{eq: Laplace transform of NLOS}) \cite{haenggi2012stochastic}. The details in deriving the characteristic function is shown in \cite{andrews2011tractable}.
	In Lemma \ref{Lemma - sub-tier with ki=1}, we prove the expressions of conditional probabilities
	\begin{eqnarray}
	\mathbb{P}\left( x_0 \in \phi_{\text{iL}} \;\boldsymbol{|}\; |x_0| = r \right) =    f_{\text{L}}(r) e^{-2 \pi {\lambda_i} \int_{0}^{r^{\alpha_{\text{L}}/\alpha_{\text{N}}}}P_{\text{N}}(z)z \, \text{d}z} 
	\end{eqnarray}
	and
	\begin{equation}
	\mathbb{P}\left( x_0 \in \phi_{\text{iN}} \;\boldsymbol{|}\; |x_0| = r \right) =    f_{\text{N}}(r)e^{-2 \pi {\lambda_i} \int_{0}^{r^{\alpha_{\text{N}}/\alpha_{\text{L}}}}P_{\text{L}}(z)z \, \text{d}z},
	\end{equation}
	where $f_\text{L}(\cdot)$ and $f_\text{N}(\cdot)$ are also given in Lemma \ref{Lemma - sub-tier with ki=1}. The coverage probability $\mathcal{C}(\tau,k_i)$ then follows.
\end{proof}
In Fig.\ref{Figure: SINR coverage probability of sub-tier i+1}, we numerically evaluate (\ref{eq: SINR interms of conditional SINR}) by showing $\mathcal{C}(\tau, k_i)$ with respect to $k_i$. The main lobe width of the analog beam is set as $\theta_{\text{A}} = 30^\circ$ with the main lobe gain $G_{\text{A}} = 20$dB and side lobe gain $g_{\text{A}} = 0$dB. The noise power is assumed to be negligible. Each mmWave AP is assumed to be equipped with $K = 12$ RF chains, which indicates the spatial multiplexing gain $k_i \leq 12$. In Section III, we show that $\phi_{i}, \forall i$, is of the same intensity as tier 0. Here, $\Lambda_0$ is represented by the inter-AP distance $r_0 =\sqrt{ 1/\pi \Lambda_0}$ \cite{bai2015coverage}. For the path loss model, we use $\beta = 1$, $\alpha_{\text{L}} = 2$ and $\alpha_{\text{N}} = 4$ \cite{rappaport2013millimeter}. It can be observed from Fig.\ref{Figure: SINR coverage probability of sub-tier i+1} that given the SINR threshold $\tau$, the decrease of coverage probability $\mathcal{C}(\tau, k_i)$ is close to linear with the increase of $k_i$. The other observation is that $\phi_{i}$ with $r_0= 100$m and $r_0 = 200$m results in the similar coverage probability of tier $i+1$. Since a higher density of transmitters results in higher power of desired signal, but leads to higher strength of interference.

\begin{figure} 
	\vspace{-0.1cm}
	\begin{minipage}[t]{0.47\textwidth}
		\centering
		\includegraphics[width=\linewidth]{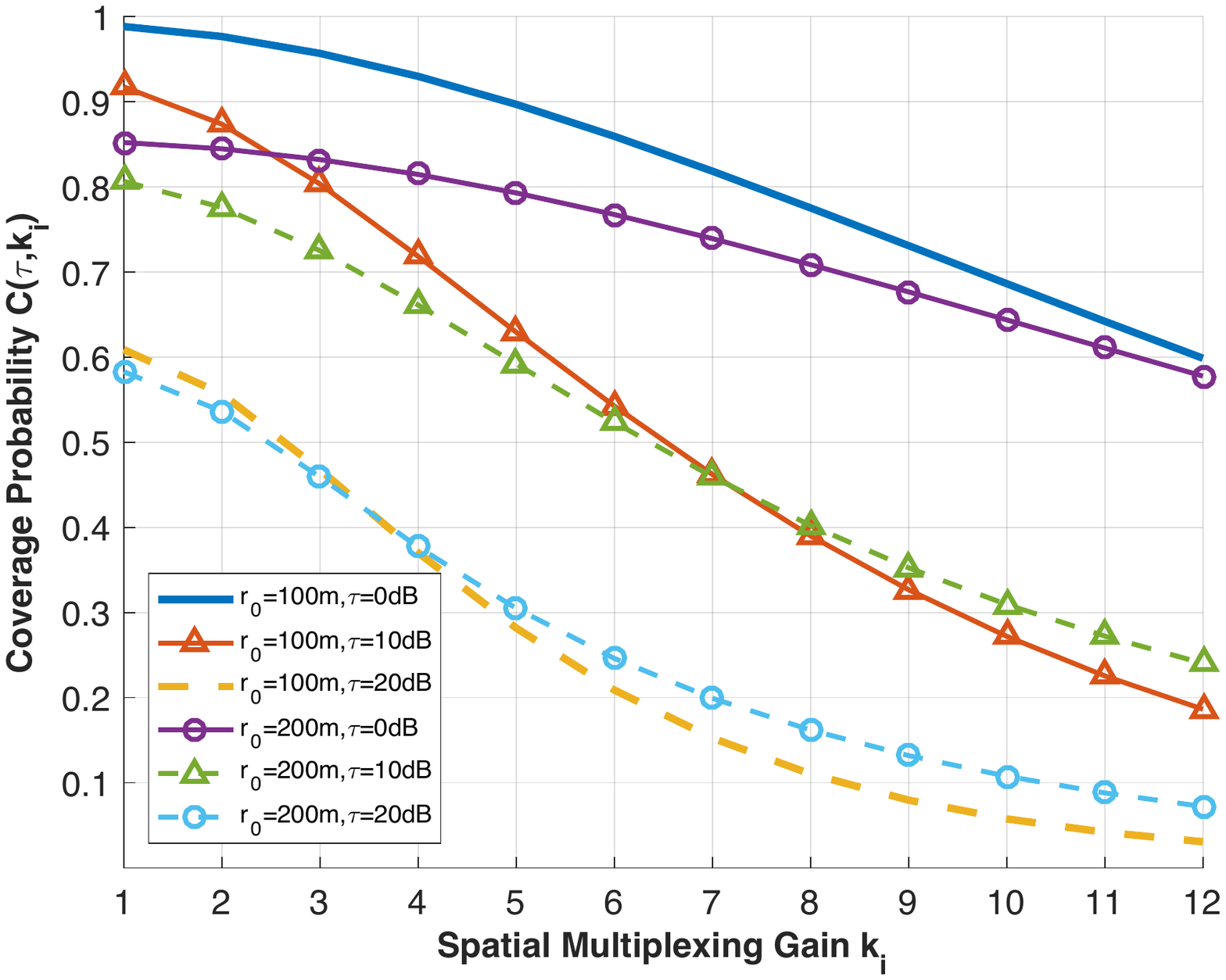}
		\caption{The coverage probability $\mathcal{C}(\tau, k_i)$ with different spatial multiplexing gain $k_i$. The intensity $\Lambda_0$ is measured by inter-APs distance $r_0$, where $r_0 = \sqrt{\frac{1}{\pi {\Lambda_0}}}$. Here, $\theta_{\text{A}} = 30^{\circ}$, $G_{\text{A}} = 20$dB and $g_{\text{A}} = 0$dB.}
		\label{Figure: SINR coverage probability of sub-tier i+1}
	\end{minipage} %
	\hfill
	\begin{minipage}[t]{0.47\textwidth}
		\includegraphics[width=\linewidth]{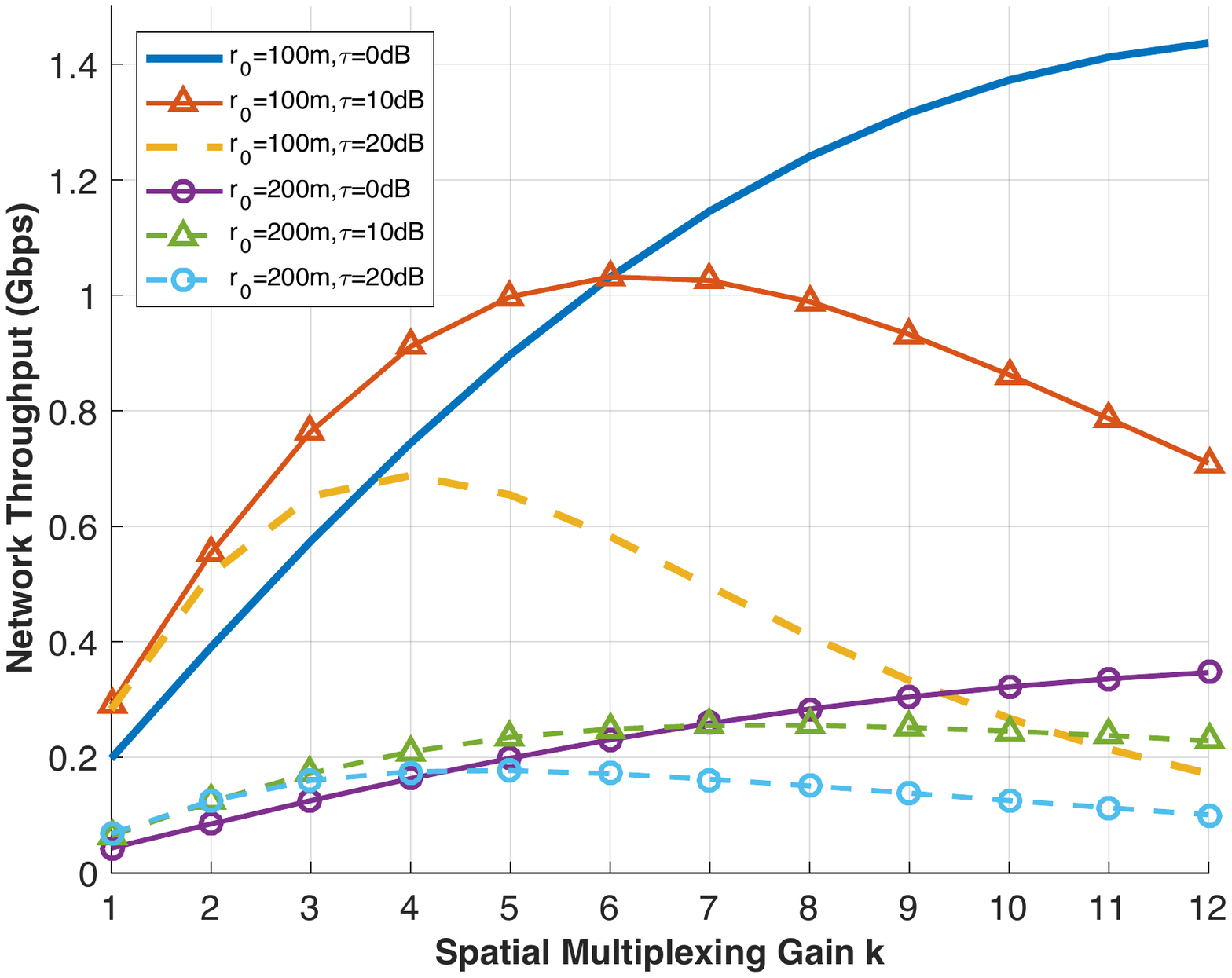}
		\caption{Network throughput with different spatial multiplexing gain $k$. The intensity $\Lambda_0$ is denoted by $r_0 = \sqrt{\frac{1}{\pi {\Lambda_0}}}$. Here, $\theta_{\text{A}} = 30^{\circ}$, $G_{\text{A}} = 20$dB and $g_{\text{A}} = 0$dB.}
		\label{Figure: network throughput}
	\end{minipage}
	\vspace{-20pt} 
\end{figure}

\subsection{Network Throughput}
For a mmWave AP network, we define the throughput as the aggregate data rate of all relay tiers i.e. $\bigcup_{i=1}^M {\Phi_{i}}$. 
Consider the network, where each transmission hop follows the same protocol. It follows that $k_i = k, \forall i$. Consequently, the coverage probability of each tier is equal to $\mathcal{C}(\tau, k)$. Assume that each AP in the network is assigned with the same bandwidth. We then derive the network throughput with respect to spatial multiplexing gain.
\begin{thm}
	Consider a mmWave AP network of density $\Lambda_{\text{A}}$, where the distribution of backhaul-connected APs follows a homogeneous PPP with intensity $\Lambda_0$. Assume that each hop follows the identical transmission protocol with the spatial multiplexing gain $k$, the network throughput is then given by
	\begin{align}
	\mathbb{T}(k) &\triangleq \frac{W(\Lambda_{\text{A}} - \Lambda_0)\mathcal{C}(\tau, k)\log_2(1+\tau)}{M}\nonumber\\ &=Wk\Lambda_0\mathcal{C}(\tau, k)\log_2(1+\tau),
	\end{align}
	where $\tau$ is the SINR threshold; $M = (\Lambda_{\text{A}} - \Lambda_0)/k\Lambda_0$ is the network latency; $W$ denotes the bandwidth of each AP; the coverage probability $\mathcal{C}(\tau, k)$ is given in (\ref{eq: SINR interms of conditional SINR}).
	\label{thm: network throughput}
\end{thm}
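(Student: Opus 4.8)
The plan is to combine the intensity bookkeeping of the tiered model (Section III) with the per-tier coverage probability of Theorem \ref{thm: SINR coverage probability}, and then carry out the latency substitution from Theorem \ref{thm: network latency}; there is essentially no analytic content beyond these, so the proof reduces to one counting argument plus algebra.

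First I would identify the total spatial density of the relay APs. Since $\bigcup_{i=0}^{M}\Phi_i$ has density $\Lambda_{\text{A}}$ and tier $0$ has density $\Lambda_0$, the relay tiers $\bigcup_{i=1}^{M}\Phi_i$ have density $\sum_{i=1}^{M}\Lambda_i = \Lambda_{\text{A}}-\Lambda_0$. With the common spatial multiplexing gain $k_i = k$, Section III gives $\Lambda_i = k\Lambda_0$ for every $i\ge 1$, hence $Mk\Lambda_0 = \Lambda_{\text{A}}-\Lambda_0$, i.e. $M = (\Lambda_{\text{A}}-\Lambda_0)/(k\Lambda_0)$, which is the stated latency and is consistent with Theorem \ref{thm: network latency} (it attains the lower bound when $k=K$).

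Next I would compute the aggregate data rate of the relay APs. Each relay AP is allotted bandwidth $W$ and delivers spectral efficiency $\log_2(1+\tau)$ exactly on the event that its received SINR exceeds $\tau$; by Theorem \ref{thm: SINR coverage probability} this event has probability $\mathcal{C}(\tau,k)$, and crucially this value does not depend on the tier index $i$, because $\phi_i$ is a PPP of intensity $\Lambda_0$ for every $i$ (Section III), so the SINR in (\ref{def: SINR distribution}) is identically distributed across hops. Summing the expected per-AP rate $W\mathcal{C}(\tau,k)\log_2(1+\tau)$ over the relay APs, by linearity of expectation over the stationary process of density $\Lambda_{\text{A}}-\Lambda_0$, gives an aggregate rate (per unit area) of $W(\Lambda_{\text{A}}-\Lambda_0)\mathcal{C}(\tau,k)\log_2(1+\tau)$, which is exactly the numerator in the definition of $\mathbb{T}(k)$.

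Finally, dividing this aggregate rate by the latency $M$ and substituting $M=(\Lambda_{\text{A}}-\Lambda_0)/(k\Lambda_0)$ cancels the factor $\Lambda_{\text{A}}-\Lambda_0$ and produces $\mathbb{T}(k)=Wk\Lambda_0\mathcal{C}(\tau,k)\log_2(1+\tau)$. The only point that deserves care — rather than being a genuine obstacle — is the tier-independence of $\mathcal{C}(\tau,k)$ just noted; once that is in hand, everything else is immediate.
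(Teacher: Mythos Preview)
Your proposal is correct and follows precisely the route the paper takes: the paper's own proof simply says ``By combining Theorem \ref{thm: network latency} and \ref{thm: SINR coverage probability}, the result immediately follows,'' and you have spelled out exactly that combination --- the intensity bookkeeping $\sum_{i=1}^M\Lambda_i=\Lambda_{\text{A}}-\Lambda_0$ with $\Lambda_i=k\Lambda_0$ yielding $M$, together with the tier-independence of $\mathcal{C}(\tau,k)$ from Section III and Theorem \ref{thm: SINR coverage probability}, followed by the substitution. Your explicit remark that the tier-independence of the coverage probability is the only nontrivial point is well-placed and is indeed the key observation underlying the paper's one-line proof.
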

\begin{proof}
	By combining Theorem \ref{thm: network latency} and \ref{thm: SINR coverage probability}, the result immediately follows.
\end{proof}
We remark that the throughput of a mmWave AP network $\mathbb{T}(k)$ is dependent on the intensity of backhaul-connected APs i.e. $\Lambda_0$. However, $\mathbb{T}(k)$ is independent of $\Lambda_{\text{A}}$. 

In Fig.\ref{Figure: network throughput}, we show the network throughput derived in Theorem \ref{thm: network throughput}. The parameters used in Fig.\ref{Figure: SINR coverage probability of sub-tier i+1} and Fig.\ref{Figure: network throughput} are the same. 
For a fixed intensity $\Lambda_0$, Fig.\ref{Figure: network throughput} demonstrates the trade-off between the latency and coverage probability when choosing the spatial multiplexing gain $k$. It can be observed from Fig.\ref{Figure: network throughput} that the network throughput is not monotonically varying with the spatial multiplexing gain. As the network latency can be improved by increasing $k$, whereas the degradation of coverage probability is considerable as $k$ increases. 
Moreover, the optimal spatial multiplexing gain is different depending on the SINR threshold. When the SINR threshold is low, the latency dominates the network throughput. Consequently, the higher spatial multiplexing gain corresponds to the higher throughput. In the region of the high SINR threshold, the coverage probability becomes the major challenge of network performance. Therefore, the smaller spatial multiplexing gain results in the higher throughput for the mmWave AP network.
By comparing the throughputs of networks with different $\Lambda_0$, Fig.\ref{Figure: network throughput} then illustrates that the ultra-dense mmWave AP network can benefit from the densification of tier 0. 

\section{Conclusion}
We proposed to incorporate the multihop transmission protocol in modeling the ultra-dense mmWave AP network. 
Moreover, we exploited the spatial distributions of mmWave APs with different transmission protocols.
Our analysis indicated that the mmWave AP network can be modeled by a PPP if spatial multiplexing is disabled. However, the topology of mmWave AP network is a collection of Neyman-Scott processes when the spatial multiplexing is enabled.
We then analyzed the performance for mmWave AP networks with different topologies. We showed that the uniform distribution of mmWave APs experiences the largest latency but has the highest coverage probability.
Moreover, the latency of mmWave AP network decreases as the spatial multiplexing gain increases, while the coverage probability drops with the increase of spatial multiplexing gain. The numerical results showed the optimal spatial multiplexing gain to maximize the throughput of the ultra-dense mmWave AP network.

\small
\bibliographystyle{IEEEtran}
\bibliography{mmWavepapers.bib}

\begin{thebibliography}{10}
\providecommand{\url}[1]{#1}
\csname url@samestyle\endcsname
\providecommand{\newblock}{\relax}
\providecommand{\bibinfo}[2]{#2}
\providecommand{\BIBentrySTDinterwordspacing}{\spaceskip=0pt\relax}
\providecommand{\BIBentryALTinterwordstretchfactor}{4}
\providecommand{\BIBentryALTinterwordspacing}{\spaceskip=\fontdimen2\font plus
\BIBentryALTinterwordstretchfactor\fontdimen3\font minus
  \fontdimen4\font\relax}
\providecommand{\BIBforeignlanguage}[2]{{%
\expandafter\ifx\csname l@#1\endcsname\relax
\typeout{** WARNING: IEEEtran.bst: No hyphenation pattern has been}%
\typeout{** loaded for the language `#1'. Using the pattern for}%
\typeout{** the default language instead.}%
\else
\language=\csname l@#1\endcsname
\fi
#2}}
\providecommand{\BIBdecl}{\relax}
\BIBdecl

\bibitem{andrews2014will}
J.~G. Andrews, S.~Buzzi, W.~Choi, S.~V. Hanly, A.~Lozano, A.~C. Soong, and
  J.~C. Zhang, ``What will 5g be?'' \emph{IEEE Journal on selected areas in
  communications}, vol.~32, no.~6, pp. 1065--1082, 2014.

\bibitem{di2015stochastic}
M.~Di~Renzo, ``Stochastic geometry modeling and analysis of multi-tier
  millimeter wave cellular networks,'' \emph{IEEE Transactions on Wireless
  Communications}, vol.~14, no.~9, pp. 5038--5057, 2015.

\bibitem{jia2016impact}
S.~Jia, D.~Ramirez, L.~Huang, Y.~Wang, and B.~Aazhang, ``On the impact of
  blockage on the throughput of multi-tier millimeter-wave networks,'' in
  \emph{2016 50th Asilomar Conference on Signals, Systems and Computers}, 2016,
  pp. 1419--1423.

\bibitem{andrews2017modeling}
J.~G. Andrews, T.~Bai, M.~N. Kulkarni, A.~Alkhateeb, A.~K. Gupta, and R.~W.
  Heath, ``Modeling and analyzing millimeter wave cellular systems,''
  \emph{IEEE Transactions on Communications}, vol.~65, no.~1, pp. 403--430,
  2017.

\bibitem{bai2015coverage}
T.~Bai and R.~W. Heath, ``Coverage and rate analysis for millimeter-wave
  cellular networks,'' \emph{IEEE Transactions on Wireless Communications},
  vol.~14, no.~2, pp. 1100--1114, 2015.

\bibitem{rappaport2013millimeter}
T.~Rappaport, S.~Sun, R.~Mayzus, H.~Zhao, Y.~Azar, K.~Wang, G.~Wong, J.~Schulz,
  M.~Samimi, and F.~Gutierrez, ``Millimeter wave mobile communications for 5g
  cellular: It will work!'' \emph{Access, IEEE}, 2013.

\bibitem{rappaport2015wideband}
T.~S. Rappaport, G.~R. MacCartney, M.~K. Samimi, and S.~Sun, ``Wideband
  millimeter-wave propagation measurements and channel models for future
  wireless communication system design,'' \emph{IEEE Transactions on
  Communications}, vol.~63, no.~9, pp. 3029--3056, 2015.

\bibitem{alammouri2018sinr}
A.~AlAmmouri, J.~G. Andrews, and F.~Baccelli, ``Sinr and throughput of dense
  cellular networks with stretched exponential path loss,'' \emph{IEEE
  Transactions on Wireless Communications}, vol.~17, no.~2, pp. 1147--1160,
  2018.

\bibitem{andrews2011tractable}
J.~G. Andrews, F.~Baccelli, and R.~K. Ganti, ``A tractable approach to coverage
  and rate in cellular networks,'' \emph{IEEE Transactions on communications},
  vol.~59, no.~11, pp. 3122--3134, 2011.

\bibitem{alkhateeb2015limited}
A.~Alkhateeb, G.~Leus, and R.~W. Heath, ``Limited feedback hybrid precoding for
  multi-user millimeter wave systems,'' \emph{IEEE transactions on wireless
  communications}, vol.~14, no.~11, pp. 6481--6494, 2015.

\bibitem{haenggi2012stochastic}
M.~Haenggi, \emph{Stochastic geometry for wireless networks}.\hskip 1em plus
  0.5em minus 0.4em\relax Cambridge University Press, 2012.

\end{thebibliography}
\end{document}